\definecolor{blueLink}{rgb}{0,0.2,0.8}
\newcommand{\lref}[2][]{\hyperref[#2]{#1~\ref*{#2}}}
\newcommand{\ALG}{\textsc{Alg}\xspace}
\newcommand{\OPT}{\textsc{Opt}\xspace}
\newcommand{\dist}{\textsf{dist}}
\newcommand{\atime}{\textsf{atime}}
\newcommand{\weight}{\textsf{weight}}
\newcommand{\budget}{\textsf{budget}}
\newcommand{\wait}{\textsf{wait}}
\newcommand{\cost}{\textsf{cost}}
\newcommand{\costALG}{\textsf{cost}_\mathrm{ALG}}
\newcommand{\costALGNF}{\textsf{cost}_\mathrm{ALG-NF}}
\newcommand{\costOPT}{\textsf{cost}_\mathrm{OPT}}
\newcommand{\C}{\mathcal{C}}
\newcommand{\REAL}{\mathbb{R}}
\newcommand{\X}{\mathcal{X}}
\newcommand{\weightt}{\textsf{ws}}
\newcommand{\leaves}{\textsf{L}}
\newcommand{\size}{\textsf{size}}
\definecolor{islamicgreen}{rgb}{0.0,0.56,0.0}
\theoremstyle{definition}
\newtheorem{observation}[theorem]{Observation}{\bfseries}{\itshape}
\title{A Match in Time Saves Nine: Deterministic Online Matching With Delays\footnote{%
Partially supported by Polish National Science Centre grant 2016/22/E/ST6/00499.
}}
\author[1]{Marcin Bienkowski}
\author[1]{Artur Kraska}
\author[1]{Pawe{\l} Schmidt}
\affil[1]{Institute of Computer Science, University of Wrocław, Poland}
\authorrunning{M. Bienkowski, A. Kraska and P. Schmidt}
\subjclass{F.1.2 Modes of Computation: Online computation, F.2.2 Nonnumerical Algorithms and Problems}
\keywords{online matching, delays, rent-or-buy, competitive analysis}
\begin{document}

\maketitle

\begin{abstract}
We consider the problem of online Min-cost Perfect Matching with Delays (MPMD)
introduced by Emek et al. (STOC 2016). In this problem, an even number of
requests appear in a metric space at different times and the goal of an
online algorithm is to match them in pairs. In contrast to traditional online
matching problems, in MPMD all requests appear online and an algorithm can match
any pair of requests, but such decision may be delayed (e.g., to find a better
match). The cost is the sum of matching distances and the introduced
delays.

We present the first deterministic online algorithm for this problem. Its competitive
ratio is $O(m^{\log_2 5.5})$ $ = O(m^{2.46})$, where $2 m$ is the number of
requests. This is polynomial in the number of metric space points if all
requests are given at different points. In particular, the bound does not
depend on other parameters of the metric, such as its aspect ratio. Unlike
previous (randomized) solutions for the MPMD problem, our algorithm does not
need to know the metric space in advance.
\end{abstract}

\section{Introduction}

In this paper, we give a deterministic online algorithm for the problem of
Min-cost Perfect Matching with Delays
(MPMD)~\cite{haste-makes-waste,matching-delays}. For an informal description,
imagine that there are human players who are logging in real time into a
gaming website, each wanting to play chess against another human player. The
system pairs the players according to their known capabilities, such as
playing strength. A decision with whom to match a given player can be delayed
until a reasonable match is found. That is, the website tries to
simultaneously minimize two objectives: the waiting times of players and their
dissimilarity, i.e., each player would like to play with another one with
similar capabilities. An algorithm running the website has to work online,
without the knowledge about future player arrivals and make its decision
irrevocably: once two players are paired, they remain paired forever.

\subsection{Problem definition}

More formally, in the MPMD problem there is a metric space $\X$ with a
distance function $\dist: \X \times \X \to \REAL$, both known from the
beginning to an online algorithm. An online part of the input is a sequence of
$2 m$ requests $\{ (p_i,t_i) \}_{i=1}^{2m}$, where point $p_i \in \X$
corresponds to a player in our informal description above and $t_i$ is the
time of its arrival. Clearly, $t_1 \leq t_2 \leq \ldots \leq t_{2m}$. The
integer $m$ is not known a priori to an online algorithm. At any time $\tau$,
an~online algorithm may decide to match any pair of requests $(p_i,t_i)$ and
$(p_j,t_j)$ that have already arrived ($\tau \geq t_i$ and $\tau \geq t_j$) and have not been matched yet. The cost incurred by such \emph{matching edge}
is $\dist(p_i,p_j) + (\tau - t_i) + (\tau - t_j)$, i.e., is the sum of the
\emph{connection cost} and the \emph{waiting costs} of these two requests.

The goal is to eventually match all requests and minimize the total
cost. We use a typical yardstick to measure the performance: a competitive
ratio~\cite{borodin-book}, defined as the maximum, over all inputs, of the
ratios between the cost of an online algorithm and the cost of an~optimal
offline solution \OPT that knows the entire input sequence in advance.

\subsection{Previous work}

The MPMD problem was introduced by Emek et al.~\cite{haste-makes-waste}, who
presented a randomized $O(\log^2 n + \log \Delta)$-competitive algorithm.
There, $n$ is the number of points in the metric space $\X$ and $\Delta$ is
its aspect ratio (the ratio between the largest and the smallest distance
in~$\X$). The competitive ratio was subsequently improved by Azar et
al.~\cite{matching-delays} to $O(\log n)$. They showed that the ratio of any
randomized algorithm is at least $\Omega(\sqrt{\log n})$. The currently best
lower bound of $\Omega(\log n / \log \log n)$ for randomized solutions was
given by Ashlagi et al.~\cite{matching-delays-bipartite-ashlagi}.

So far, the construction of a competitive \emph{deterministic} algorithm for
general metric spaces remained an open problem. It was hypothesized that
competitive ratios achievable by deterministic algorithms might be
superpolynomial in $n$ (cf.~Section 5 of \cite{matching-delays}).
Deterministic algorithms were known only for simple spaces: Azar et
al.~\cite{matching-delays} gave an $O(\textnormal{height})$-competitive
algorithm for trees and Emek et al.~\cite{matching-delays-two-points}
constructed a $3$-competitive deterministic solution for two-point metric (the
competitive ratio is best possible for such metric).

\subsection{Our contribution}

In this paper, we give the first deterministic algorithm for any metric space,
whose competitive ratio is $O(m^{\log_2 5.5}) = O(m^{2.46})$, where $2m$ is the
number of requests. Typically, for our gaming application, $m$ is
smaller than $n$ (although in full generality it can be also larger if
multiple requests arrive at the same point of the metric space $\X$). While
previous solutions to the MPMD problem~\cite{haste-makes-waste,matching-delays}
required $\X$ to be finite and known a priori (to approximate it first by a
random HST tree~\cite{metrics-approximation-optimal} or a random HST tree with
reduced height~\cite{k-server-madry}), our solution works even when $\X$ is
revealed in online manner. That is, we require only that, together with any
request $r$, an online algorithm learns the distances from $r$ to all previous, not yet
matched requests. 

Our online algorithm \ALG uses a simple, local, semi-greedy scheme to find a
suitable matching pair. In the analysis, we fix a final perfect matching of \OPT and
observe what happens when we gradually add matching edges that \ALG creates during its execution.
That is, we trace the evolution of alternating paths and cycles in time. To bound the
cost of \ALG, we charge the cost of an edge that \ALG is adding against the cost
of already existing matching edges from the same alternating path.
Interestingly, our charging argument on alternating cycles bears some
resemblance to the analyses of algorithms for the problems that are not
directly related to MPMD: online metric (bipartite) matching on line
metrics~\cite{online-matching-antoniadis} and offline greedy
matching~\cite{matching-tarjan-reingold}.

\subsection{Related work}
\label{sec:related}

Originally, matching problems have been studied in variants where delaying
decisions was not permitted. The setting most similar to the MPMD problem is
called online \emph{metric bipartite matching}. In involves $m$ \emph{offline
points} given to an algorithm at the beginning and $m$~\emph{requests}
presented in online manner that need to be matched (immediately after their
arrival) to offline points. Both points and requests lie in a common metric
space and the goal is to minimize the weight of a~perfect matching created by
an algorithm. For general metric spaces, the best randomized solution is
$O(\log m)$-competitive~\cite{online-matching-bansal,online-matching-gupta,online-matching-meyerson},
and the deterministic algorithms achieve the optimal competitive ratio of
$2m-1$~\cite{online-matching-pruhs,online-matching-khuller}. Interestingly,
even for line metrics~\cite{online-matching-antoniadis,online-matching-line-lower-bound,online-matching-koutsoupias},
the best known deterministic algorithm attains a competitive ratio that is
polynomial in $m$~\cite{online-matching-antoniadis}.

In comparison, in the MPMD problem considered in this paper, all $2m$ requests appear in
online manner, $m$ is not known to an algorithm, and we allow to match any pair of them. 
That said, there is
also a bipartite variant of the MPMD problem, in which all requests
appear online, but $m$ of them are negative and $m$ are positive. An algorithm
may then only match pairs of requests of different
polarities~\cite{matching-delays-bipartite-azar,matching-delays-bipartite-ashlagi}.

The MPMD problem can be cast as augmenting min-cost perfect matching with a
time axis, allowing the algorithm to delay its decisions, but penalizing the
delays. There are many other problems that use this paradigm: most notably the
ski-rental problem and its continuous counterpart, the spin-block
problem~\cite{snoopy-caching-randomized}, where a purchase decision can be
delayed until renting cost becomes sufficiently large. Such rent-or-buy (wait-or-act)
trade-offs are also found in other areas, for example in aggregating messages
in computer networks~\cite{tcp-ack-albers,
aggregation-wads,
tcp-ack-det-journal,
tcp-ack,
khanna-message-aggregation,
delay-sensitive-aggregation},
in aggregating orders in supply-chain 
management~\cite{multilevel-aggregation,
jrp-soda,
aggregation-bkv,
multilevel-aggregation-buchbinder,
jrp-online-buchbinder,
aggregation-survey}
or in some scheduling variants~\cite{make-to-order-scheduling}.

Finally, there is a vast amount of work devoted to other online matching variant, 
where offline points and online requests are connected by graph edges 
and the goal is to maximize the weight or the cardinality of the produced matching. 
These types of matching problems have been studied since the seminal work of Karp et~al.~\cite{online-matching}
and are motivated by applications to online auctions~\cite{online-matching-simple,
adwords-primal-dual,
concave-matching,
ranking-primal-dual,
online-matching,
bipartite-matching-strongly-lp,
adwords-lp,
adwords-ec}. They were also studied under stochastic assumptions on the input, see, e.g., a survey by Mehta~\cite{adwords-survey}.

\section{Algorithm}

We will identify requests with the points at which they arrive. To this end,
we assume that all requested points are different, but we allow distances
between different metric points to be zero. For any request $p$, we denote the
time of its arrival by $\atime(p)$.

Our algorithm is parameterized with real numbers $\alpha > 0$ and
$\beta > 1$, whose exact values will be optimized later. 
For any request $p$, we define its waiting time at time $\tau \geq \atime(p)$ as
\[
	\wait_\tau(p) = \tau - \atime(p) 
\]
and its budget at time $\tau$ as
\[
	\budget_\tau(p) = \alpha \cdot \wait_\tau(p)
	\enspace.
\]

Our online algorithm \ALG matches two requests $p$ and $q$ at time $\tau$ as soon as the following
two conditions are satisfied.
\begin{itemize}
\item \emph{Budget sufficiency}: $\budget_\tau(p) + \budget_\tau(q) \geq \dist(p,q)$.
\item \emph{Budget balance}: $\budget_\tau(p) \leq \beta \cdot \budget_\tau(q)$ and 
$\budget_\tau(q) \leq \beta \cdot \budget_\tau(p)$. 
\end{itemize}

Note that the budget balance condition is equivalent to relations on waiting 
times, i.e.,
$\wait_\tau(p) \leq \beta \cdot \wait_\tau(q)$ and 
$\wait_\tau(q) \leq \beta \cdot \wait_\tau(p)$. 

If the conditions above are met simultaneously for many point pairs,
we break ties arbitrarily, and process them in any order. Note that at the
time when $p$ and $q$ become matched, the sum of their budgets may exceed
$\dist(p,q)$. For example, this occurs when $q$ appears at time strictly
larger than $\atime(p) + \dist(p,q)$: they are then matched by \ALG as soon
as the budget balance condition becomes true.

The observation below follows immediately by the definition of \ALG.

\begin{observation}
\label{obs:free_pair}
Fix time $\tau$ and two requests $p$ and $q$, such that $\atime(p) \leq \tau$
and $\atime(q) \leq \tau$. Assume that neither $p$ nor $q$ has been
matched by \ALG strictly before time $\tau$. Then exactly one of the following conditions 
holds:
\begin{itemize}
	\item $\alpha \cdot (\wait_{\tau}(p) + \wait_{\tau}(q)) \leq \dist(p, q)$,
	\item $\alpha \cdot (\wait_{\tau}(p) + \wait_{\tau}(q)) > \dist(p, q)$ and 
		$\wait_{\tau}(p) \geq \beta \cdot \wait_{\tau}(q)$,
	\item $\alpha \cdot (\wait_{\tau}(p) + \wait_{\tau}(q)) > \dist(p, q)$ and 
		$\wait_{\tau}(q) \geq \beta \cdot \wait_{\tau}(p)$.
\end{itemize}
\end{observation}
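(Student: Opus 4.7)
The plan is first to verify pairwise disjointness of the three cases and then, via contradiction with \ALG's matching rule, to show at least one must hold.

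For disjointness, note that case 1 asserts $\alpha\cdot(\wait_\tau(p)+\wait_\tau(q)) \leq \dist(p,q)$, whereas cases 2 and 3 both assert the strict opposite, so case 1 is incompatible with either. If cases 2 and 3 held simultaneously, combining $\wait_\tau(p) \geq \beta\cdot\wait_\tau(q)$ with $\wait_\tau(q) \geq \beta\cdot\wait_\tau(p)$ would give $\wait_\tau(p) \geq \beta^2\cdot\wait_\tau(p)$, forcing $\wait_\tau(p)=\wait_\tau(q)=0$ since $\beta > 1$; then $\alpha\cdot 0 \leq \dist(p,q)$ contradicts the strict inequality common to cases 2 and 3.

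For existence, suppose toward contradiction that all three cases fail at $\tau$. Then $\alpha\cdot(\wait_\tau(p)+\wait_\tau(q)) > \dist(p,q)$ and both $\wait_\tau(p) < \beta\wait_\tau(q)$ and $\wait_\tau(q) < \beta\wait_\tau(p)$ hold strictly. Introduce the slack functions $s(\tau') = \alpha\cdot(\wait_{\tau'}(p)+\wait_{\tau'}(q)) - \dist(p,q)$, $b_1(\tau') = \beta\wait_{\tau'}(q)-\wait_{\tau'}(p)$ and $b_2(\tau') = \beta\wait_{\tau'}(p)-\wait_{\tau'}(q)$; each is linear and strictly increasing in $\tau'$ on $[\max\{\atime(p),\atime(q)\},\infty)$, using $\alpha>0$ and $\beta>1$. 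By assumption all three are strictly positive at $\tau$, so each has either a unique zero strictly before $\tau$ or is already non-negative at $\max\{\atime(p),\atime(q)\}$. Let $\tau^{**}$ be the maximum of the three zero-crossings (replacing any zero-crossing that would fall before $\max\{\atime(p),\atime(q)\}$ by $\max\{\atime(p),\atime(q)\}$ itself); then $\tau^{**} < \tau$, and at $\tau^{**}$ all three slacks are non-negative, so both \ALG matching conditions between $p$ and $q$ hold. Since both $p$ and $q$ are still free at $\tau^{**}$ by hypothesis, \ALG matches them at $\tau^{**}$, contradicting the assumption that neither is matched strictly before $\tau$.

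The main subtlety is this continuity argument: one must ensure the first time both matching conditions hold lies strictly before $\tau$ rather than at $\tau$ itself. Strict positivity of all three linear, strictly increasing slacks at $\tau$ forces their simultaneous non-negativity to be achieved at an earlier instant, which is exactly what closes the contradiction.
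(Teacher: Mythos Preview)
Your proof is correct. The paper itself treats this observation as immediate from the definition of \ALG and gives no explicit argument; your continuity argument spells out exactly the intended reasoning, namely that the negation of all three cases says both matching conditions hold strictly at~$\tau$, and since the slack functions are strictly increasing they were already non-negative at some strictly earlier time, forcing \ALG to act on the pair before~$\tau$.
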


\section{Analysis}

To analyze the performance of \ALG, we look at matchings generated by \ALG
and by an~optimal offline algorithm \OPT. If points $p$ and $q$ were matched
at time~$\tau$ by \ALG, then we say that \ALG creates a (matching) edge $e = (p,q)$.
Its cost is 
\[
	\costALG(e) = \costALG(p,q) = \dist(p,q) + \wait_\tau(p) + \wait_\tau(q)
	\enspace.
\]
We call $e$ an \ALG-edge. The $\costOPT$ of an edge in the solution of \OPT (an
\OPT-edge) is defined analogously. In an optimal solution, however,
the matching time is always equal to the arrival time of the later of two
matched requests.

We now consider a dynamically changing graph consisting of requested points,
\OPT-edges and \ALG-edges. For the analysis, we assume that it changes in the
following way: all requested points and all \OPT-edges are present in the
graph from the beginning, but the \ALG-edges are added to the graph in $m$
steps, in the order they are created by \ALG.

At all times, the matching edges present in the graph form alternating paths
or cycles (i.e., paths or cycles whose edges are interleaved \ALG-edges and
\OPT-edges). Furthermore, any node-maximal alternating path
starts and ends with \OPT-edges. Assume now that a~matching edge~$e$ created
by \ALG is added to the graph. It may either connect the ends of two different
alternating paths, thus creating a single longer alternating path or connect
the ends of one alternating path, generating an alternating cycle. In the
former case, we call edge~$e$ \emph{non-final}, in the latter case ---
\emph{final}. Note that at the end of the \ALG execution, when $m$
\ALG-edges are added, the graph contains only alternating cycles.

We extend the notion of cost to alternating path and cycles. 
For any cycle $C$, $\cost(C)$ is simply the 
sum of costs of its edges: the cost of 
an \OPT-edge on such cycle is the cost paid by \OPT and the cost of an \ALG-edge
is that of \ALG. We also define $\costOPT(C)$, $\costALG(C)$ and $\costALGNF(C)$ 
as the costs of \OPT-edges, \ALG-edges and non-final \ALG-edges on cycle $C$, respectively.
Clearly, $\costALG(C) + \costOPT(C) = \cost(C)$.
We define the same notions for alternating paths; as a path $P$
does not contain final \ALG-edges, $\costALGNF(P) = \costALG(P)$.

An alternating path is called \emph{$\kappa$-step maximal alternating path} if it
exists in the graph after \ALG matched $\kappa$ pairs 
and it cannot be extended, i.e., it ends with two requests
that are not yet matched by the first $\kappa$ \ALG-edges.

\subsection{Tree construction}
\label{sec:tree_def}

To facilitate the analysis, along with the graph, we create a dynamically
changing forest $F$ of binary trees, where each leaf of~$F$ corresponds to an
\OPT-edge and each internal (non-leaf) node of $F$ to a non-final \ALG-edge (and vice
versa). After \ALG matched $\kappa$ pairs, each subtree of $F$ corresponds to
a~$\kappa$-step maximal alternating path or to an alternating cycle. More
precisely, at the beginning, $F$~consists of $m$ single nodes representing
\OPT-edges. Afterwards, whenever an \ALG-edge is created, we perform the following operation on $F$.
\begin{itemize}
\item
When a non-final \ALG-edge $e = (p,q)$ is added to the graph, we look at
the two alternating paths $P$ and $Q$ that end with $p$ and $q$,
respectively. We take the corresponding trees $T(P)$ and $T(Q)$ of $F$. We add a
node $v(e)$ (representing edge $e$) to $F$ and make $T(P)$ and $T(Q)$ its
subtrees.
\item
When a final \ALG-edge $e = (p,q)$ is added to the graph, it turns an
alternating path~$P$ into an alternating cycle $C$. We then simply say that
the tree $T(P)$ that corresponded to~$P$, now corresponds to $C$.
\end{itemize}
An example of the graph and the associated forest $F$ is presented in \lref[Figure]{fig:forest_creation}.

\begin{figure}
\centering
\includegraphics[width=0.95\textwidth]{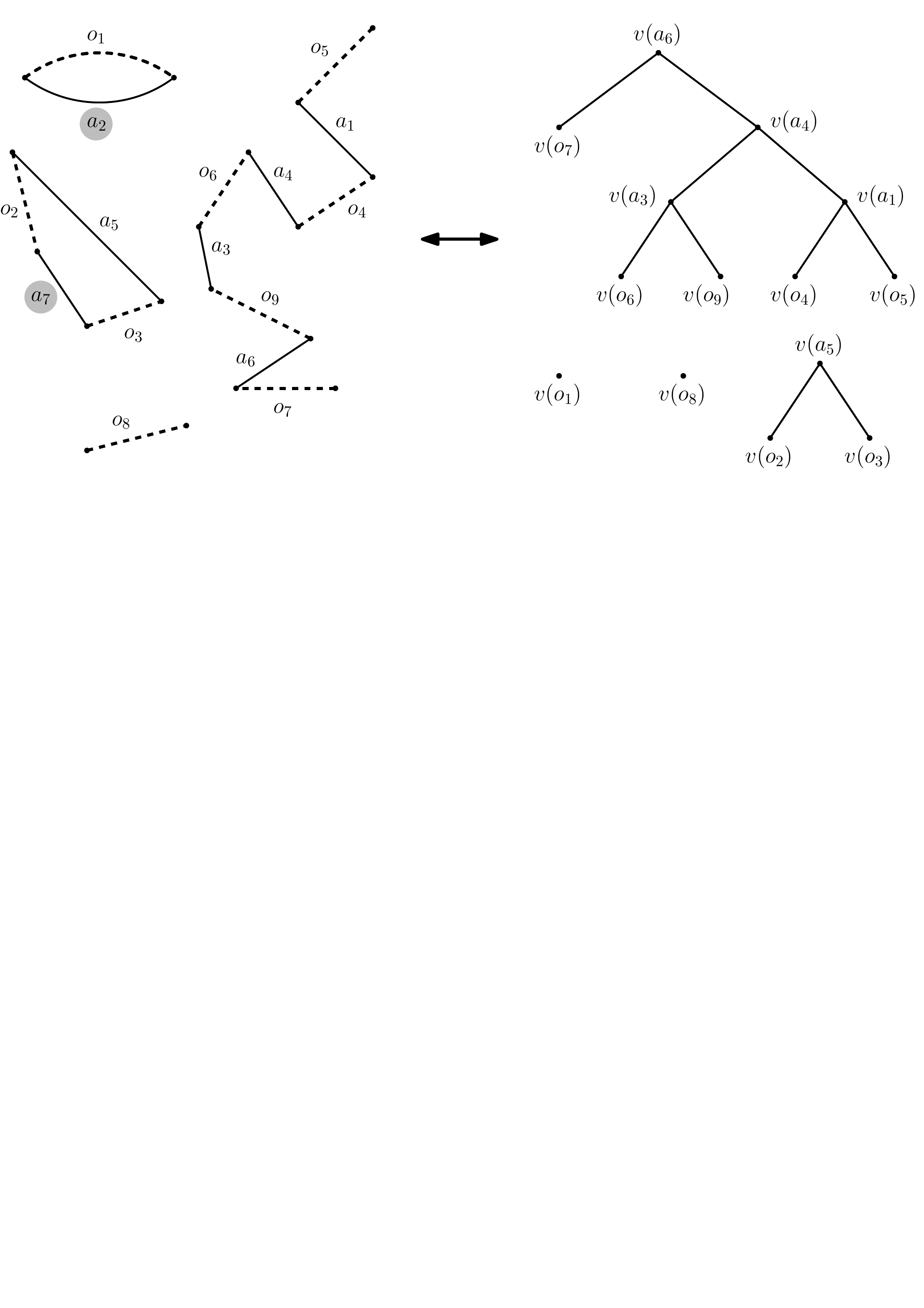}
\caption{%
The left side contains an example graph consisting of all \OPT-edges $o_1,
o_2, \ldots, o_9$ (dashed lines) and the first $\kappa = 7$
\ALG-edges $a_1, a_2, \ldots, a_7$ (solid lines). 
\ALG-edges are numbered in the order they were created and added to the graph. 
Shaded \ALG-edges ($a_2$ and $a_7$) are final, the remaining ones are
non-final. \\ The right side depicts the corresponding forest $F$: leaves of
$F$ represent \OPT-edges and non-leaf nodes of $F$ correspond to non-final
\ALG-edges. Trees rooted at nodes $v(o_1)$ and $v(a_5)$ represent alternating cycles and those rooted at nodes $v(a_6)$ and $v(o_8)$ represent alternating paths in the
graph.
}
\label{fig:forest_creation}
\end{figure}

For any tree node $w$, we define its weight $\weight(w)$ as the cost of the
corresponding matching edge, i.e., the cost of an \OPT-edge for a leaf and the
cost of a non-final \ALG-edge for a non-leaf node. For any node $w$, by $T_w$
we denote the tree rooted at $w$. We extend the notion of weight in a natural
manner to all subtrees of $F$. In these terms, the weight of a~tree~$T$ in $F$ is
equal to the total cost of the corresponding alternating path. (If $T$ 
represents an alternating cycle $C$, then its weight is equal to the cost 
of $C$ minus the cost of the final \ALG-edge from $C$.)

Note that we consistently used terms ``points'' and ``edges'' for objects that
\ALG and \OPT are operating on in the metric space $\X$. On the other hand,
the term ``nodes'' will always refer to tree nodes in $F$ and we will not use
the term ``edge'' to denote an edge in $F$.

\subsection{Outline of the analysis}

Our approach to bounding the cost of
\ALG is now as follows. We look at the forest $F$ at the end of \ALG execution.
The corresponding graph contains only alternating cycles. The cost of
non-final \ALG-edges is then, by the definition, equal to the total weight of
internal (non-leaf) nodes of $F$, while the cost of \OPT-edges is equal to the total
weight of leaves of $F$. Hence, our goal is to relate the total weight of any
tree to the weight of its leaves.

The central piece of our analysis is showing that for any internal node $w$
with children $u$ and $v$, it holds that $\weight(w) \leq \xi \cdot
\min\{ \weight(T_u), \weight(T_v)\}$, where $\xi$ is a constant depending on 
parameters $\alpha$ and $\beta$ (see \lref[Corollary]{cor:tree_bound}). Using this
relation, we will bound the total weight of any tree by $O(m^{\log_2 {(\xi+2) -
1}})$ times the total weight of its leaves. This implies the same bound on the
ratio between non-final \ALG-edges and \OPT-edges on each alternating cycle.

Finally, we show that the cost of final \ALG-edges incurs at most an
additional constant factor in the total cost of \ALG.

\subsection{Cost of non-final ALG-edges}
\label{sec:tree_node_weights}

As described in \lref[Section]{sec:tree_def}, when \ALG adds a $\kappa$-th
\ALG-edge $e$ to the graph, and this edge is non-final, $e$ joins two
$(\kappa-1)$-step maximal alternating paths $P$ and $Q$. We will bound
$\costALG(e)$ by a constant (depending on $\alpha$ and $\beta$) times $\min \{
\cost(P), \cost(Q) \}$. We start with bounding the waiting cost of \ALG
related to one endpoint of $e$.

\begin{lemma}
\label{lem:crucial_lemma}
	Let $e = (p,q)$ be the $\kappa$-th \ALG-edge added at time $\tau$, such
	that $e$ is non-final. Let $P = (a_1, a_2, \ldots, a_\ell$) be the
	$(\kappa-1)$-step maximal alternating path ending at $p = a_1$. Then,
	$\wait_\tau(p) \leq \max \{ \alpha^{-1}, {\beta}/{(\beta -1)} \} \cdot
	\cost(P)$.
\end{lemma}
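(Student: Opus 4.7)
The plan is to invoke Observation~\ref{obs:free_pair} on a carefully chosen pair of requests from $P$ and combine its three-way dichotomy with a telescoping bound on arrival-time gaps along the path. Before doing so, I note that since $e$ is the $\kappa$-th \ALG-edge and $P$ is $(\kappa-1)$-step maximal, every interior request $a_2, \ldots, a_{\ell-1}$ is an endpoint of an earlier \ALG-edge and therefore must have arrived by~$\tau$; request $a_1$ has also arrived (it is being matched now), while $a_\ell$ is unmatched at $\tau^-$ but may or may not have arrived online.

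The main case is $\atime(a_\ell) \leq \tau$. Both endpoints are then arrived and unmatched just before time~$\tau$, so I apply Observation~\ref{obs:free_pair} to $(a_1, a_\ell)$ at time~$\tau$. Writing $w_i := \tau - \atime(a_i)$, the three possibilities yield, respectively, (i) $w_1 \leq \dist(a_1, a_\ell)/\alpha$, (ii) $w_1 \leq \frac{\beta}{\beta-1}(w_1 - w_\ell)$, or (iii) $w_1 \leq \frac{1}{\beta-1}(w_\ell - w_1)$. For (i), the triangle inequality along $P$ gives $\dist(a_1, a_\ell) \leq \sum_{i=1}^{\ell-1} \dist(a_i, a_{i+1}) \leq \cost(P)$. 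For (ii) and (iii), what I need is the bound $|w_1 - w_\ell| = |\atime(a_\ell) - \atime(a_1)| \leq \cost(P)$, which I establish next.

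This telescoping bound follows from a per-edge inequality $|\atime(a_i) - \atime(a_{i+1})| \leq \cost((a_i, a_{i+1}))$. For an \OPT-edge this is immediate, as \OPT matches at $\max(\atime(a_i), \atime(a_{i+1}))$ and its waiting cost on the edge equals the time gap. For an \ALG-edge matched at some time $\sigma$, both waits $\sigma - \atime(a_i)$ and $\sigma - \atime(a_{i+1})$ are nonnegative, so their sum---which is part of $\costALG$---dominates their absolute difference. Summing over the $\ell-1$ edges of $P$ yields $|\atime(a_\ell) - \atime(a_1)| \leq \cost(P)$, closing sub-cases (ii) and (iii) and, combined with (i), the whole main case.

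The main obstacle is the remaining case $\atime(a_\ell) > \tau$, in which $a_\ell$ is absent online and Observation~\ref{obs:free_pair} does not apply to $(a_1, a_\ell)$. If $\ell = 2$ the single \OPT-edge already carries \OPT-waiting cost $\atime(a_2) - \atime(a_1) > w_1$, and the lemma holds trivially. Otherwise I use $a_{\ell-1}$ as a proxy: letting $\sigma^* < \tau$ denote the creation time of the \ALG-edge $(a_{\ell-2}, a_{\ell-1})$, the assumption $\atime(a_\ell) > \tau$ implies $\tau - \sigma^* < \atime(a_\ell) - \atime(a_{\ell-1}) \leq \cost((a_{\ell-1}, a_\ell))$. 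If $\atime(a_1) \geq \sigma^*$, then $w_1 < \tau - \sigma^* \leq \cost(P)$ directly. Otherwise Observation~\ref{obs:free_pair} applies to $(a_1, a_{\ell-1})$ at time $\sigma^*$, and re-running the three-case analysis on the sub-path $(a_1, \ldots, a_{\ell-1})$ bounds $\wait_{\sigma^*}(a_1)$ by $\max\{\alpha^{-1}, \beta/(\beta-1)\} \cdot \sum_{i=1}^{\ell-2} \cost((a_i, a_{i+1}))$; adding the residual $\tau - \sigma^*$, absorbed by $\cost((a_{\ell-1}, a_\ell))$, yields the desired $w_1 \leq \max\{\alpha^{-1}, \beta/(\beta-1)\} \cdot \cost(P)$.
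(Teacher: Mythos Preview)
Your argument is correct and shares the paper's core idea: apply \lref[Observation]{obs:free_pair} to the endpoint pair $(a_1,a_\ell)$ at time~$\tau$ and feed each of the three outcomes into the telescoping bound $\dist(a_1,a_\ell)+|\atime(a_1)-\atime(a_\ell)|\le\cost(P)$. The only place you diverge is the case $\atime(a_\ell)>\tau$, which you handle by introducing the proxy $a_{\ell-1}$, the earlier matching time $\sigma^*$, and a second invocation of \lref[Observation]{obs:free_pair} on the sub-path. This is valid but unnecessary: the paper dispatches this case in one line by observing that $\wait_\tau(a_1)=\tau-\atime(a_1)<\atime(a_\ell)-\atime(a_1)\le\cost(P)$, using the very telescoping bound you already established, so the constant $1<\beta/(\beta-1)$ suffices here without any further case analysis.
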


\begin{proof}
First we lower-bound the cost of an alternating path $P$. We look at any edge
$(a_i,a_{i+1})$ from $P$. Its cost (no matter whether paid by \ALG or \OPT) is
certainly larger than $\dist(a_i, a_{i+1}) + |\atime(a_i) - \atime(a_{i+1})|$.
Therefore, using triangle inequality (on distances and times), we obtain
\begin{align}
	\cost(P) 
	\geq &\; \sum_{i=1}^{\ell-1} \left( \dist(a_i,a_{i+1}) + | \atime(a_i) - \atime(a_{i+1}) | \right) \nonumber \\
	\geq &\; \dist(a_1,a_\ell) + |\atime(a_1) - \atime(a_\ell)| 
	\enspace.
	\label{eq:triangle_inequality}
\end{align}
Therefore, in our proof we will simply bound $\wait_\tau(p) = \wait_\tau(a_1)$ 
using either $\dist(a_1,a_\ell)$ or $|\atime(a_1) - \atime(a_\ell)|$.

Recall that $\ALG$ matches $a_1$ at time $\tau$. Consider the state of
$a_\ell$ at time $\tau$. If $a_\ell$ has not been presented to \ALG yet
($\atime(a_\ell) > \tau$), then $\wait_\tau(a_1) = \tau -
\atime(a_1) < \atime(a_\ell) - \atime(a_1) < \beta / (\beta-1) 
\cdot ( \atime(a_\ell) - \atime(a_1) )$,  and the lemma follows.

In the remaining part of the proof, we assume that $a_\ell$ was already
presented to the algorithm ($\atime(a_\ell) \leq \tau$). As $P$ is a
$(\kappa-1)$-step maximal alternating path, $a_\ell$ is not matched by $\ALG$
right after $\ALG$ creates $(\kappa-1)$-th matching edge. The earliest time
when $a_\ell$ may become matched is when \ALG creates the next, $\kappa$-th
matching edge, i.e., at time $\tau$. Therefore $a_\ell$ is not matched before
time $\tau$.

Now observe that there must be a reason for which 
requests $a_1$ and $a_\ell$ have not been matched with each other before time $\tau$. 
Roughly speaking, either the sum of budgets of requests $a_1$ and $a_\ell$ does not suffice to cover the cost of
$\dist(a_1,a_\ell)$ or one of them waits significantly longer than the other.
Formally, we apply \lref[Observation]{obs:free_pair} to pair $(a_1,a_\ell)$ obtaining three possible cases.
In each of the cases we bound $\wait_\tau(a_1)$ appropriately.

\begin{description}
	\item[Case 1 (insufficient budgets).]
	If $\alpha \cdot (\wait_\tau(a_1) + \wait_\tau(a_\ell)) \leq \dist(a_1, a_\ell)$, then 
	by non-negativity of $\wait_\tau(a_\ell)$, it follows that 
	$\wait_\tau(a_1) \leq \alpha^{-1} \cdot \dist(a_1,a_\ell)$. \\	

	\item[Case 2 ($a_1$ waited much longer than $a_\ell$).] 
	If $\alpha \cdot (\wait_\tau(a_1) + \wait_\tau(a_\ell)) >  \dist(a_1, a_\ell)$ 
	and $\wait_\tau(a_1) \geq \beta \cdot \wait_\tau(a_\ell)$, then 
	$\atime(a_\ell) - \atime(a_1) = \wait_\tau(a_1) - \wait_\tau(a_\ell) \geq (1-1/\beta)\cdot \wait_\tau(a_1)$.
	Therefore, 
	$\wait_\tau(a_1) \leq \beta / (\beta -1) \cdot |\atime(a_1) - \atime(a_\ell)|$. \\

	\item[Case 3 ($a_\ell$ waited much longer than $a_1$).] 
	If $\alpha \cdot (\wait_\tau(a_1) + \wait_\tau(a_\ell)) > \dist(a_1, a_\ell)$ 
	and $\wait_\tau(a_\ell) \geq \beta \cdot \wait_\tau(a_1)$, then
	$\atime(a_1) - \atime(a_\ell) = \wait_\tau(a_\ell) - \wait_\tau(a_1) \geq (\beta - 1) \cdot \wait_\tau(a_1)$. 
	Thus, 
	$\wait_\tau(a_1) \leq 1/(\beta-1) \cdot |\atime(a_1) - \atime(a_\ell)|
	< \beta/(\beta-1) \cdot |\atime(a_1) - \atime(a_\ell)|$.
	\qedhere
\end{description}	
\end{proof}

\begin{lemma}
\label{lem:crucial_lemma_2}
	Let $e = (p,q)$ be the $\kappa$-th \ALG-edge, such that $e$ is non-final. Let $P = (a_1, a_2, \ldots, a_\ell)$ 
	and $Q = (b_1, b_2, \ldots, b_{\ell'})$ be the $(\kappa-1)$-step maximal alternating path ending at $p = a_1$
	and $q = b_1$, respectively. Then, 
	\[
		\costALG(e) \leq 
			(1+\alpha) \cdot (\beta + 1) \cdot \max \{ \alpha^{-1}, \beta/ (\beta -1) \}
			\cdot \min \{ \cost(P), \cost(Q) \}
		\enspace.
	\]
\end{lemma}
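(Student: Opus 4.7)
The plan is to express $\costALG(e) = \dist(p,q) + \wait_\tau(p) + \wait_\tau(q)$ purely in terms of the two waiting times and then apply \lref[Lemma]{lem:crucial_lemma} symmetrically to both endpoints. Since \ALG matches $p$ and $q$ at time $\tau$, the budget sufficiency condition yields $\dist(p,q) \leq \alpha \cdot (\wait_\tau(p) + \wait_\tau(q))$, so $\costALG(e) \leq (1+\alpha) \cdot (\wait_\tau(p) + \wait_\tau(q))$. The budget balance condition, recast in terms of waiting times, gives $\max\{\wait_\tau(p), \wait_\tau(q)\} \leq \beta \cdot \min\{\wait_\tau(p), \wait_\tau(q)\}$, hence $\wait_\tau(p) + \wait_\tau(q) \leq (\beta + 1) \cdot \min\{\wait_\tau(p), \wait_\tau(q)\}$.

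Next I would invoke \lref[Lemma]{lem:crucial_lemma} twice, once for the endpoint $p$ of the path $P$ and once for the endpoint $q$ of the path $Q$. Both applications are legitimate because $e$ is non-final, so $P \neq Q$ and both are genuine $(\kappa-1)$-step maximal alternating paths, and the hypothesis of the lemma is symmetric in the two endpoints of $e$. Setting $M := \max\{\alpha^{-1}, \beta/(\beta-1)\}$, this yields $\wait_\tau(p) \leq M \cdot \cost(P)$ and $\wait_\tau(q) \leq M \cdot \cost(Q)$.

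The mildly subtle step is passing from these two separate bounds to a bound involving $\min\{\cost(P), \cost(Q)\}$. The point is that $\min\{\wait_\tau(p), \wait_\tau(q)\}$ is at most each of $\wait_\tau(p)$ and $\wait_\tau(q)$ individually, so it is simultaneously at most $M \cdot \cost(P)$ and $M \cdot \cost(Q)$; therefore $\min\{\wait_\tau(p), \wait_\tau(q)\} \leq M \cdot \min\{\cost(P), \cost(Q)\}$. Chaining everything together gives
\[
	\costALG(e) \leq (1+\alpha)(\beta+1) \cdot M \cdot \min\{\cost(P), \cost(Q)\},
\]
which is exactly the claimed inequality.

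I do not anticipate a real obstacle here: the heavy lifting was already done in \lref[Lemma]{lem:crucial_lemma}, and everything else is a routine combination of the two conditions defining \ALG. The only points worth stating carefully are the min-of-min argument in the previous paragraph and the symmetric applicability of \lref[Lemma]{lem:crucial_lemma} to the second endpoint $q$ of a non-final edge.
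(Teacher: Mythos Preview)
Your argument is correct and matches the paper's proof essentially step for step: first use budget sufficiency to get $\costALG(e) \leq (1+\alpha)(\wait_\tau(p)+\wait_\tau(q))$, then budget balance to pass to $(1+\alpha)(\beta+1)\min\{\wait_\tau(p),\wait_\tau(q)\}$, and finally apply \lref[Lemma]{lem:crucial_lemma} to each endpoint and take the minimum. The only difference is cosmetic---you make the ``min-of-min'' step explicit, whereas the paper leaves it implicit.
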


\begin{proof}
	Let $\tau$ be the time when $p$ is matched with $q$ by $\ALG$. 
	Using the definition of $\costALG$, we obtain
	\begin{align}
		 \costALG(p,q) 
			& =	\dist(p, q) + \wait_\tau(p) + \wait_\tau(q) \nonumber \\
			& \leq \budget_\tau(p) + \budget_\tau(q) + \wait_\tau(p) + \wait_\tau(q) \nonumber \\
			& = (1+\alpha) \cdot (\wait_\tau(p) + \wait_\tau(q)) \nonumber \\
			& \leq (1+\alpha) \cdot (\beta+1) \cdot \min \{ \wait_\tau(p), \wait_\tau(q) \}
			\enspace. 
			\label{eq:path_bound}
	\end{align}
	The first inequality follows by the budget sufficiency condition of \ALG and the second one by the budget balance condition.

	By \lref[Lemma]{lem:crucial_lemma}, 
	$\wait_\tau(p) \leq \max \{ \alpha^{-1}, {\beta}/{(\beta -1)} \} \cdot \cost(P)$
	and 
	$\wait_\tau(q) \leq \max \{ \alpha^{-1},$ $ {\beta}/{(\beta -1)} \} \cdot \cost(Q)$,
	which combined with \eqref{eq:path_bound} immediately yield the lemma.
\end{proof}

Recall now the iterative construction of forest $F$ from
\lref[Section]{sec:tree_def}: whenever a non-final matching edge $e$ created
by \ALG joins two alternating paths $P$ and $Q$, we add a~new node $w$ to $F$,
such that $\weight(w) = \costALG(e)$ and make trees $T(P)$ and $T(Q)$ its
children. These trees correspond to paths $P$ and $Q$, and satisfy
$\weight(T(P)) = \cost(P)$ and $\weight(T(Q)) = \cost(Q)$. Therefore,
\lref[Lemma]{lem:crucial_lemma_2} immediately implies the following equivalent
relation on tree weights.

\begin{corollary}
\label{cor:tree_bound}
Let $w$ be an internal node of the forest $F$ whose children are $u$ and $v$. 
Then, $\weight(w) \leq (1+\alpha) \cdot (\beta + 1) \cdot 
\max \{ \alpha^{-1}, \beta/ (\beta -1) \} \cdot \min \{ \weight(T_u), \weight(T_v)\}$.
\end{corollary}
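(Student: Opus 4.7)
The plan is to derive the corollary as an essentially immediate transcription of Lemma \ref{lem:crucial_lemma_2} into the language of the forest $F$. Beyond invoking that lemma, the only additional ingredient is a bookkeeping invariant identifying the weight of each subtree of $F$ with the cost of the corresponding alternating structure.

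First, I would pinpoint the step at which the internal node $w$ was inserted into $F$. By the construction of Section \ref{sec:tree_def}, this happens precisely when \ALG creates some non-final edge $e = (p,q)$ joining two $(\kappa-1)$-step maximal alternating paths $P$ (ending at $p$) and $Q$ (ending at $q$), for some $\kappa$. At that step the existing trees $T(P)$ and $T(Q)$ become the two children of $w$, so (up to relabeling) $T_u = T(P)$ and $T_v = T(Q)$, while $\weight(w) = \costALG(e)$ holds by definition.

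Next, I would establish the invariant that at every moment and for every tree $T$ in $F$ corresponding to an alternating structure $\Pi$ (path or cycle), $\weight(T)$ equals the sum of costs of the \OPT-edges and of the non-final \ALG-edges along $\Pi$. A short induction on the number of \ALG-edges processed so far suffices: the initial one-leaf trees trivially satisfy it; adding a non-final edge creates a new root of weight $\costALG(e)$ on top of two subtrees satisfying the inductive hypothesis, matching exactly the cost of the concatenated path; and adding a final edge neither changes $F$ nor contributes to the right-hand side of the invariant. For a path $P$, which by definition contains no final \ALG-edges, the invariant reduces to $\weight(T(P)) = \cost(P)$.

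Combining the two, Lemma \ref{lem:crucial_lemma_2} applied to the triple $(e, P, Q)$ bounds $\weight(w) = \costALG(e)$ by $(1+\alpha)(\beta+1) \cdot \max\{\alpha^{-1}, \beta/(\beta-1)\} \cdot \min\{\cost(P), \cost(Q)\}$, and the invariant substitutes $\weight(T_u)$ and $\weight(T_v)$ for $\cost(P)$ and $\cost(Q)$, yielding the claim. I do not anticipate a genuine obstacle here; the only step requiring mild care is the final-edge case of the invariant, i.e.\ confirming that edges which never become nodes of $F$ are likewise excluded from the cost accounting on the right-hand side, so that the equality between tree weight and path cost is preserved without adjustment.
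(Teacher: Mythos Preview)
Your proposal is correct and mirrors the paper's own argument: the paper likewise observes that $\weight(w) = \costALG(e)$, $\weight(T(P)) = \cost(P)$, $\weight(T(Q)) = \cost(Q)$, and then invokes Lemma~\ref{lem:crucial_lemma_2} directly. If anything, your explicit inductive verification of the invariant $\weight(T) = \costOPT + \costALGNF$ along the corresponding structure is more careful than the paper, which simply asserts the path case without proof.
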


This relation can be used to express the total weight of a tree of $F$ in
terms of the total weight of its leaves. The proof of the following technical lemma is
deferred to \lref[Section]{sec:total_tree_cost}. Here, we present how to use
it to bound the cost of \ALG on non-final edges of a single alternating cycle.

\begin{lemma}
\label{lem:tree_cost}
Let $T$ be a weighted full binary tree and $\xi \geq 0$ be any constant. Assume that for each internal node $w$ with children $u$ and $v$, their weights satisfy $\weight(w) \leq \xi \cdot \min \{ \weight(T_u), \weight(T_v) \}$.
Then, 
\[
	\weight(T) \leq (\xi + 2) \cdot |\leaves(T)|^{\log_2(\xi/2+1)} \cdot \weight(\leaves(T))
	\enspace,
\] 
where $\leaves(T)$ is the set of leaves of $T$ and $\weight(\leaves(T))$ is their total weight.
\end{lemma}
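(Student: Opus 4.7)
The plan is to proceed by induction on the number of leaves $n = |\leaves(T)|$, aiming to show $\weight(T) \leq (\xi+2) \cdot n^{\gamma} \cdot \weight(\leaves(T))$ with $\gamma := \log_2(\xi/2+1)$, i.e., $2^\gamma = \xi/2 + 1$. The base case $n=1$ is immediate: $T$ is a single leaf and $(\xi+2) \geq 1$.

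For the inductive step, let $w$ be the root of $T$ with children $u, v$, and let $T_u, T_v$ be their subtrees with $n_u, n_v$ leaves (so $n_u + n_v = n$). The crucial manipulation is to introduce a free parameter $\lambda \in [0,1]$ through the trivial bound $\min\{a,b\} \leq \lambda a + (1-\lambda) b$, which lets us distribute $\weight(w)$ asymmetrically between the two subtrees:
\begin{equation*}
\weight(T) = \weight(w) + \weight(T_u) + \weight(T_v) \leq (\xi\lambda + 1)\, \weight(T_u) + (\xi(1-\lambda)+1)\, \weight(T_v).
\end{equation*}
Applying the inductive hypothesis separately to $T_u$ and $T_v$, the inductive step reduces to producing a single $\lambda \in [0,1]$ that makes both $(\xi\lambda + 1)\, n_u^\gamma \leq n^\gamma$ and $(\xi(1-\lambda) + 1)\, n_v^\gamma \leq n^\gamma$ hold.

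The main obstacle is to show that such a $\lambda$ always exists. The symmetric choice $\lambda = 1/2$ works only for balanced splits $n_u = n_v$, and the extreme choices $\lambda \in \{0,1\}$ work only for very skewed splits; in general, $\lambda$ must be tuned to the split ratio. Rewriting the two conditions as linear inequalities in $\lambda$, a valid $\lambda \in [0,1]$ exists iff $(n/n_u)^\gamma + (n/n_v)^\gamma \geq \xi + 2$. Putting $x = n_u/n$ and $y = n_v/n$ (so $x+y=1$), this is the inequality $x^{-\gamma} + y^{-\gamma} \geq 2 \cdot 2^\gamma$, which is precisely Jensen's inequality applied to the convex function $t \mapsto t^{-\gamma}$ at $x, y$ around their midpoint $1/2$. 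Boundary cases where one linear constraint is automatically vacuous (i.e., when $(A-1)/\xi \geq 1$ or $1-(B-1)/\xi \leq 0$ for $A = (n/n_u)^\gamma$, $B = (n/n_v)^\gamma$) are handled by $\lambda \in \{1,0\}$ respectively. With such $\lambda$ in hand, the two inductive bounds combine into the desired $(\xi+2)\, n^\gamma\, (\weight(\leaves(T_u)) + \weight(\leaves(T_v))) = (\xi+2)\, n^\gamma\, \weight(\leaves(T))$, closing the induction.
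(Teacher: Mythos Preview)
Your proof is correct (modulo the trivial $\xi=0$ case, where you divide by $\xi$; there every internal node has weight zero and the claim is immediate). It is, however, organized differently from the paper's argument. The paper first normalizes so that the average leaf weight equals~$1$, then introduces a single additive potential $\size(T_w) = \weightt(\leaves(T_w)) + |\leaves(T_w)|$ and proves by structural induction that $\weightt(T_w) \leq \size(T_w)^{\log_2(\xi+2)}$; the inductive step is driven by a separate convexity lemma stating that $\xi\cdot\min\{f(x),f(y)\}+f(x)+f(y)\leq f(x+y)$ for $f(a)=a^{\log_2(\xi+2)}$, and the final exponent $\log_2(\xi/2+1)$ appears only after dividing out one factor of $|\leaves(T)|$ at the end. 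Your route avoids the normalization and the auxiliary potential: you induct directly on the target inequality, introduce the free parameter $\lambda$ to distribute $\weight(w)$ between the two subtrees, and then reduce the existence of a good $\lambda$ to Jensen's inequality for $t\mapsto t^{-\gamma}$. Both arguments are at heart the same convexity fact read in two directions; the paper's packaging is a bit slicker because the potential is additive and no case analysis on the range of $\lambda$ is needed, while yours is more elementary and hits the stated bound without any rescaling step.
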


\begin{lemma}
	\label{lem:non_final_bound}
	Let $C$ be an alternating cycle obtained from combining matchings of \ALG
	and \OPT. Then
	$\costALGNF(C) \leq (\xi + 2) \cdot m^{\log_2(\xi/2+1)} \cdot \costOPT(C)$, where
	$\xi = (1+\alpha) \cdot (\beta + 1) \cdot
\max \{ \alpha^{-1}, \beta/ (\beta -1) \}$.
\end{lemma}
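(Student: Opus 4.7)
The plan is to reduce the lemma to Corollary \ref{cor:tree_bound} together with the technical Lemma \ref{lem:tree_cost} via the correspondence between the cycle $C$ and its associated tree in the forest $F$. First, by the construction in Section \ref{sec:tree_def}, the alternating cycle $C$ is represented by a single tree $T$ in the final forest $F$ (obtained right before $C$'s final \ALG-edge is added, or equivalently the tree whose nodes correspond to the matching edges of $C$ except the final \ALG-edge). Under this correspondence, the leaves of $T$ are in bijection with the \OPT-edges on $C$ and the internal nodes of $T$ are in bijection with the non-final \ALG-edges on $C$. Since $\weight$ on each node equals the cost of the associated matching edge, this yields the two key identities $\weight(\leaves(T)) = \costOPT(C)$ and $\weight(T) - \weight(\leaves(T)) = \costALGNF(C)$, hence $\weight(T) = \costOPT(C) + \costALGNF(C)$.

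Next, I would invoke Corollary \ref{cor:tree_bound} to verify the hypothesis of Lemma \ref{lem:tree_cost}: for every internal node $w$ with children $u,v$ in $T$, the inequality $\weight(w) \leq \xi \cdot \min\{\weight(T_u), \weight(T_v)\}$ holds with exactly the $\xi$ defined in the statement. Applying Lemma \ref{lem:tree_cost} then gives
\[
    \weight(T) \leq (\xi + 2) \cdot |\leaves(T)|^{\log_2(\xi/2+1)} \cdot \weight(\leaves(T)).
\]
Since $T$ is a subtree of the forest $F$ whose leaves all represent distinct \OPT-edges, and since there are only $m$ \OPT-edges in total, we have $|\leaves(T)| \leq m$, so
\[
    \costALGNF(C) \leq \weight(T) \leq (\xi + 2) \cdot m^{\log_2(\xi/2+1)} \cdot \costOPT(C),
\]
which is exactly the claimed bound.

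There is no serious obstacle here, since the heavy lifting is done by Corollary \ref{cor:tree_bound} and (the still-to-be-proved) Lemma \ref{lem:tree_cost}. The only point that deserves a careful sentence is the identification of $C$ with a single tree $T$ of $F$: one has to note that the process in Section \ref{sec:tree_def} assigns to every alternating cycle created during \ALG's execution a unique tree whose leaves enumerate precisely the \OPT-edges of $C$ and whose internal nodes enumerate precisely the non-final \ALG-edges of $C$, so that the weight accounting lines up exactly as above. Once this bookkeeping is settled, the rest is a direct substitution.
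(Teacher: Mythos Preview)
Your proposal is correct and follows essentially the same approach as the paper: identify $C$ with its tree $T$ in $F$, use the correspondence to translate $\costOPT(C)$ and $\costALGNF(C)$ into tree weights, invoke Corollary~\ref{cor:tree_bound} to meet the hypothesis of Lemma~\ref{lem:tree_cost}, apply that lemma, and bound $|\leaves(T)|\le m$. The paper's proof is line-for-line the same argument.
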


\begin{proof}
	As described in \lref[Section]{sec:tree_def}, $C$ is associated with a tree $T$ 
	from forest $F$, such that \OPT-edges of $C$ correspond to the set of 
	leaves of $T$ (denoted $L(T)$) and non-final \ALG-edges of $C$ correspond 
	to internal (non-leaf) nodes of $T$. Hence, 
	$\costOPT(C) = \weight(\leaves(T))$ and 
	$\costALGNF(C) + \costOPT(C) = \weight(T)$.
		
	By \lref[Corollary]{cor:tree_bound}, the weight of any internal tree node $w$ with
	children $u, v$ satisfies
	$\weight(w) \leq \xi \cdot \min \{ \weight(T_u), \weight(T_v)\}$.
	Therefore, we may apply \lref[Lemma]{lem:tree_cost} to tree $T$, obtaining
	$\weight(T) \leq (\xi + 2) \cdot |\leaves(T)|^{\log_2(\xi/2+1)} \cdot
	\weight(L(T))$, and thus
	\begin{align*}
		\costALGNF(C) 
		\leq \weight(T)
		\leq &\; (\xi + 2) \cdot |L(T)|^{\log_2(\xi/2+1)} \cdot \weight(L(T)) \\
		\leq &\; (\xi + 2) \cdot m^{\log_2(\xi/2+1)} \cdot \weight(L(T)) \\
		= &\; (\xi + 2) \cdot m^{\log_2(\xi/2+1)} \cdot \costOPT(C)
		\enspace.
	\end{align*}
	The last inequality follows as $|\leaves(T)|$, the number of $T$ leaves, 
	is equal to the number of \OPT-edges on cycle $C$, which is clearly at most $m$.
\end{proof}

%%%%%%%%%%%%%%%%%%%%%%%%%%%%%%%%%%%%%%%%%%%%%%%%%%%%%
%%%%%%%%%%%%%%%%%%%%%%%%%%%%%%%%%%%%%%%%%%%%%%%%%%%%%

\subsection{Cost of final ALG-edges}

In the previous section, we derived a bound on the cost of all non-final
\ALG-edges. The following lemma shows that the cost of final \ALG-edges 
contribute at most a constant factor to the competitive ratio.

\begin{lemma}
\label{lem:final_edge_bound}
	Let $e$ be a final \ALG-edge matched at time $\tau$ and $C$ be the
	alternating cycle containing $e$. Then $\costALG(e) \leq (1+\alpha) \cdot
	\max \{ \alpha^{-1},(\beta+1)/(\beta-1) \} \cdot (\costALGNF(C) + \costOPT(C))$.
\end{lemma}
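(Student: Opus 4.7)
The plan is to mirror the structure of Lemmas \ref{lem:crucial_lemma} and \ref{lem:crucial_lemma_2}, but applied to the \emph{two endpoints of the same alternating path} rather than endpoints of two different ones. When \ALG adds the final edge $e = (p,q)$ at time $\tau$, it closes an alternating path $P'$ whose endpoints are $p$ and $q$; this path consists of exactly the \OPT-edges of $C$ and all \ALG-edges of $C$ except $e$ itself (all of which are non-final because $e$ is the unique final edge of $C$). Hence $\cost(P') = \costOPT(C) + \costALGNF(C)$, and the target bound reduces to showing $\costALG(e) \leq (1+\alpha) \cdot \max\{\alpha^{-1}, (\beta+1)/(\beta-1)\} \cdot \cost(P')$.

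For the first factor, I would repeat the derivation from \eqref{eq:path_bound} verbatim: using budget sufficiency at time $\tau$,
\[
\costALG(e) = \dist(p,q) + \wait_\tau(p) + \wait_\tau(q) \leq (1+\alpha) \cdot (\wait_\tau(p) + \wait_\tau(q)).
\]
Then I would lower-bound $\cost(P')$ by the same triangle-inequality computation as in Lemma \ref{lem:crucial_lemma}, yielding $\cost(P') \geq \dist(p,q) + |\atime(p) - \atime(q)|$. So it remains to argue $\wait_\tau(p) + \wait_\tau(q) \leq \max\{\alpha^{-1}, (\beta+1)/(\beta-1)\} \cdot (\dist(p,q) + |\atime(p) - \atime(q)|)$.

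This is where I would apply Observation \ref{obs:free_pair} directly to the pair $(p,q)$ at time $\tau$ (they are unmatched strictly before $\tau$, so the hypothesis is met), splitting into the three cases. In Case 1 the sum is bounded by $\alpha^{-1} \cdot \dist(p,q)$. In Case 2, using $\wait_\tau(q) \leq \wait_\tau(p)/\beta$, I would bound $\wait_\tau(p) + \wait_\tau(q) \leq \wait_\tau(p) \cdot (\beta+1)/\beta$ and separately observe that $|\atime(p) - \atime(q)| = \wait_\tau(p) - \wait_\tau(q) \geq \wait_\tau(p) \cdot (\beta-1)/\beta$, which combine to $\wait_\tau(p) + \wait_\tau(q) \leq (\beta+1)/(\beta-1) \cdot |\atime(p)-\atime(q)|$. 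Case 3 is symmetric.

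The main subtlety, as opposed to Lemma \ref{lem:crucial_lemma_2}, is that we cannot pay with a $\min$ of the two alternating-path costs here, and we also cannot save an extra factor via budget balance (it would push both waiting times into the bound and then out again). The compensating trick is precisely that we are bounding the \emph{sum} $\wait_\tau(p) + \wait_\tau(q)$ directly against a single path cost $\cost(P')$, and the budget-balance inequality in Cases 2/3 is exactly what lets us collapse $\wait_\tau(p) + \wait_\tau(q)$ into a multiple of $\wait_\tau(p)$, which in turn is bounded by $|\atime(p)-\atime(q)|$ — a quantity already charged inside $\cost(P')$ by the triangle inequality. This mirrors the argument of Lemma \ref{lem:crucial_lemma} applied to both endpoints simultaneously, and assembling the pieces yields the stated inequality.
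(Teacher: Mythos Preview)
Your proposal is correct and follows essentially the same route as the paper: bound $\costALG(e)\le(1+\alpha)(\wait_\tau(p)+\wait_\tau(q))$ via budget sufficiency, bound $\dist(p,q)+|\atime(p)-\atime(q)|\le\cost(P')=\costALGNF(C)+\costOPT(C)$ via the triangle inequality along $C\setminus\{e\}$, and then control $\wait_\tau(p)+\wait_\tau(q)$ by a case split. The only cosmetic difference is that the paper argues directly that at the matching moment one of the two \ALG conditions must hold \emph{with equality} (by continuity, since otherwise the pair would have been matched earlier), whereas you invoke \lref[Observation]{obs:free_pair} to get the same trichotomy with weak inequalities; the ensuing arithmetic is identical and yields the same $\max\{\alpha^{-1},(\beta+1)/(\beta-1)\}$ factor.
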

\begin{proof}
	Fix a final \ALG-edge $e = (p, q)$, where $\atime(q) \geq \atime(p)$.
	By the budget sufficiency condition of \ALG, 
	\begin{equation}
		\label{eq:final_edges_1}
		\costALG(e) \leq (1+\alpha) \cdot(\wait_\tau(p) + \wait_\tau(q))
		\enspace.
	\end{equation}

	Our goal now is to bound $\wait_\tau(p) + \wait_\tau(q)$ in terms of
	$\dist(p,q)$ or $\atime(q) - \atime(p)$. Observe that whenever \ALG
	matches two requests, the budget sufficiency condition of \ALG or one of
	the inequalities of the budget balance condition is satisfied with
	equality. We apply this observation to pair $(p,q)$.
	
	\begin{itemize}
	\item If the budget sufficiency condition holds with equality, 
    $\alpha \cdot (\wait_\tau(p) + \wait_\tau(q)) = \dist(p, q)$, and 
    therefore $\wait_\tau(p) + \wait_\tau(q) = \alpha^{-1} \cdot \dist(p, q)$.

    \item 
    If the budget balance condition holds with equality, 
   	$\beta \cdot \wait_\tau(q) = \wait_\tau(p)$. 
   	Then,
   	\begin{align*}
   		(\beta - 1) \cdot (\wait_\tau(p) + \wait_\tau(q)) 
   		= &\; (\beta - 1) \cdot (\beta + 1) \cdot \wait_\tau(q) \\
   		= &\; (\beta + 1) \cdot (\wait_\tau(p) - \wait_\tau(q)) \\
   		= &\; (\beta + 1) \cdot (\atime(q) - \atime(p))
   		\enspace.
   	\end{align*}
	\end{itemize}
	Hence, in either case it holds that 
	\begin{equation}
	\label{eq:final_edges_2}
	\wait_\tau(p) + \wait_\tau(q) \leq \max \left\{ \alpha^{-1}, \frac{\beta+1}{\beta-1} 
		\right\} \cdot (\dist(p, q) + |\atime(q) - \atime(p)|)
	\enspace.
	\end{equation}

	Finally, we bound $\dist(p, q) + |\atime(q) - \atime(p)|$ in terms of
	costs of other edges of $C$. These edges form a path $P =
	(a_1,a_2,\ldots,a_\ell)$, where $a_1 = p$ and $a_\ell = q$. By the
	triangle inequality applied to distances and time differences (in the same
	way as in~\eqref{eq:triangle_inequality}), we obtain that
	\begin{equation}
	\label{eq:final_edges_3}
		\dist(p, q) + |\atime(q) - \atime(p)| \leq \cost(P) = \costALGNF(C) + \costOPT(C)
		\enspace.
	\end{equation}
	The lemma follows immediately by combining 
	\eqref{eq:final_edges_1}, \eqref{eq:final_edges_2} and \eqref{eq:final_edges_3}.
\end{proof}

\subsection{The competitive ratio}

Finally, we optimize constants $\alpha$ and $\beta$ used throughout the previous
sections and bound the competitiveness of \ALG.

\begin{theorem}
	For $\beta = 2$ and $\alpha = 1/2$, the competitive ratio of \ALG is
	$O(m^{\log_25.5})=O(m^{2.46})$, where $2m$ is the number of requests in
	the input sequence.
\end{theorem}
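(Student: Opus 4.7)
The plan is a straightforward substitution-and-aggregation argument: first plug the chosen constants into the bound from \lref[Lemma]{lem:non_final_bound} to get the stated exponent, then glue together the non-final and final contributions using \lref[Lemma]{lem:final_edge_bound}, and finally sum over all alternating cycles.

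First I would compute $\xi$ from \lref[Corollary]{cor:tree_bound} with $\alpha=1/2$ and $\beta=2$. We have $1+\alpha=3/2$, $\beta+1=3$, and $\max\{\alpha^{-1},\beta/(\beta-1)\}=\max\{2,2\}=2$, so $\xi=(3/2)\cdot 3\cdot 2 = 9$ and therefore $\xi/2+1 = 11/2 = 5.5$. Plugging this into \lref[Lemma]{lem:non_final_bound}, for every alternating cycle $C$ in the combined matching we obtain
\[
    \costALGNF(C) \leq 11 \cdot m^{\log_2 5.5} \cdot \costOPT(C)
    \enspace.
\]

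Next I would handle the final \ALG-edges. With $\alpha=1/2$ and $\beta=2$, the constant in \lref[Lemma]{lem:final_edge_bound} becomes $(1+\alpha)\cdot\max\{\alpha^{-1},(\beta+1)/(\beta-1)\}=(3/2)\cdot\max\{2,3\}=9/2$. Since each cycle $C$ contains exactly one final \ALG-edge $e_C$, the lemma yields $\costALG(e_C)\leq (9/2)\,(\costALGNF(C)+\costOPT(C))$. Adding this to $\costALGNF(C)$ and using the non-final bound above gives
\[
    \costALG(C) = \costALGNF(C) + \costALG(e_C)
    \leq O\bigl(m^{\log_2 5.5}\bigr) \cdot \costOPT(C)
    \enspace.
\]

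Finally I would sum over all alternating cycles that partition the union of \ALG's and \OPT's matchings. Since $\sum_C \costALG(C) = \costALG$ and $\sum_C \costOPT(C) = \costOPT$, the per-cycle bound immediately gives $\costALG \leq O(m^{\log_2 5.5})\cdot \costOPT$, establishing the theorem. There is no real obstacle here once the constants are plugged in: all the heavy lifting was already done in \lref[Lemma]{lem:non_final_bound} and \lref[Lemma]{lem:final_edge_bound}. The only minor point worth being explicit about is that the decomposition into alternating cycles is valid because both \ALG and \OPT produce perfect matchings on the same $2m$ requests, so their symmetric difference consists solely of alternating cycles and the costs add up cycle by cycle.
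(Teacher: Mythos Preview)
Your proposal is correct and follows essentially the same approach as the paper: compute $\xi=9$ and the final-edge constant $9/2$ from the chosen parameters, invoke \lref[Lemma]{lem:non_final_bound} and \lref[Lemma]{lem:final_edge_bound} on each alternating cycle, and sum. One small terminological slip: in the last paragraph you refer to the \emph{symmetric difference} of the two matchings, but the decomposition is of their \emph{union} viewed as a multigraph (so that a pair matched identically by \ALG and \OPT yields a 2-cycle rather than disappearing); otherwise the per-cycle costs would not add up to the total costs.
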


\begin{proof}
	The union of matchings constructed by \ALG and \OPT can be split into a
	set $\C$ of disjoint cycles. It is sufficient to show that we have the
	desired performance guarantee on each cycle from $\C$.

	Fix a cycle $C\in\C$. Let $e=(p,q)$ be the final \ALG-edge of $C$. By
	\lref[Lemma]{lem:final_edge_bound}, $\costALG(e) \leq 4.5 \cdot
	\left(\costALGNF(C) + \costOPT(C) \right)$. Therefore, the competitive
	ratio of \ALG is at most
	\[
		\frac{\costALG(C)}{\costOPT(C)} 
			\leq \frac{ 5.5\cdot \costALGNF(C) + 4.5\cdot \costOPT(C)}{\costOPT(C)}
			\leq O(m^{\log_2{5.5}}) = O(m^{2.46})\enspace,
	\]
	where the second inequality follows by \lref[Lemma]{lem:non_final_bound}.
\end{proof}

\section{Relating weights in trees (proof of Lemma \ref*{lem:tree_cost})}
\label{sec:total_tree_cost}

We start with the following technical claim that will facilitate 
the inductive proof of \lref[Lemma]{lem:tree_cost}.

\begin{lemma}
\label{lem:convex_ineq}
	Fix any constant $\xi \geq 0$ and let $f(a) = a^{\log_2(\xi+2)}$.
	Then, $\xi \cdot \min \{ f(x), f(y) \} + f(x) + f(y) \leq f(x+y)$ for all $x, y \geq 0$.
\end{lemma}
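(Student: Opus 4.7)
The plan is to reduce to a single-variable inequality via homogeneity and then prove monotonicity. Set $c = \log_2(\xi + 2)$ so that $f(a) = a^c$ and $\xi + 1 = 2^c - 1$. Observe that since $\xi \geq 0$, we have $c \geq 1$, which will be the key ingredient. The function $f$ is nondecreasing, so without loss of generality we may assume $0 \leq x \leq y$, whence $\min\{f(x), f(y)\} = x^c$, and the desired inequality becomes
\[
(2^c - 1)\, x^c + y^c \leq (x+y)^c.
\]

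The case $x = 0$ is trivial (both sides equal $y^c$). Otherwise, dividing through by $x^c$ and letting $t = y/x \geq 1$, the inequality is equivalent to
\[
(2^c - 1) + t^c \leq (1+t)^c \qquad \text{for all } t \geq 1.
\]
Define $g(t) = (1+t)^c - t^c$. At $t = 1$ we have $g(1) = 2^c - 1$, so the inequality reads $g(t) \geq g(1)$ for $t \geq 1$. Thus it suffices to show that $g$ is nondecreasing on $[1, \infty)$.

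Differentiating, $g'(t) = c\bigl[(1+t)^{c-1} - t^{c-1}\bigr]$. Since $c \geq 1$, the map $s \mapsto s^{c-1}$ is nondecreasing on $[0, \infty)$, and because $1+t \geq t$ we obtain $g'(t) \geq 0$. Hence $g$ is nondecreasing, which gives the required inequality and completes the argument. I do not expect any real obstacle here; the only subtlety is remembering that the assumption $\xi \geq 0$ is precisely what is needed to ensure $c \geq 1$, without which the monotonicity step (and indeed the lemma) would fail. The equality case $x = y$ corresponds to $t = 1$, reflecting the fact that the recursion in Lemma~\ref{lem:tree_cost} is tight on perfectly balanced trees.
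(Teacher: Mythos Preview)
Your proof is correct. The paper organizes the argument differently: instead of exploiting homogeneity to reduce to a single variable and then differentiating, it fixes $z = x+y$, sets $g_z(a) = (\xi+1)\,f(a) + f(z-a)$, checks that $g_z(0) = g_z(z/2) = f(z)$, and invokes convexity of $g_z$ (a sum of convex functions, since $c = \log_2(\xi+2) \geq 1$) to conclude $g_z(a) \leq f(z)$ for all $a \in [0,z/2]$; taking $a = x$ finishes. Both arguments ultimately rest on the same fact---that $a \mapsto a^c$ is convex for $c \geq 1$---and are of comparable length and difficulty. Your homogeneity-plus-derivative route is a bit more explicit and computational, while the paper's endpoint-matching convexity bound is slightly slicker and derivative-free; neither is meaningfully more general for this lemma.
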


\begin{proof}
	Fix any $z \geq 0$ and let $g_z(a) = (\xi+1) \cdot f(a) + f(z-a)$. 
	We observe that $g_z(0) = f(z)$ and $g_z(z/2) = (\xi+1) \cdot f(z/2) + f(z/2) 
		= (\xi+2) \cdot (z/2)^{\log_2(\xi+2)} 
		= z^{\log_2(\xi+2)} = f(z)$.
	Moreover, the function $g_z$ is convex as it is a sum of two convex functions.
	As $g_z(0) = g_z(z/2) = f(z)$, by convexity, $g_z(a) \leq f(z)$ for any $a \in [0,z/2]$.

	To prove the lemma, assume without loss of generality that $x\leq y$. By the monotonicity,
	$f(x) \leq f(y)$, and therefore
	\begin{align*}
		\xi \cdot \min \{ f(x), f(y) \} + f(x) + f(y) 
			= &\; (\xi + 1) \cdot f(x) + f((x+y)-x) \\
			= &\; g_{x+y}(x) \\
			\leq &\; f(x+y)\enspace.
	\end{align*}
	The last inequality follows as $x \leq (x+y)/2$. 
\end{proof}

\begin{proof}[Proof of \text{\lref[Lemma]{lem:tree_cost}}]
We scale weights of all nodes, so that the average weight of each leaf is~$1$,
i.e., we define a scaled weight function $\weightt$ as  
\[
	\weightt(w) = \weight(w) \cdot \frac{|\leaves(T)|}{\weight(\leaves(T))} 
	\enspace.
\]
Note that $\weightt$ also satisfies $\weightt(w) \leq \xi \cdot \min \{
\weightt(T_u), \weightt(T_v) \}$. Moreover, since we scaled all weighs in the
very same way, ${\weightt(T)}/{\weightt(\leaves(T))} =
{\weight(T)}/{\weight(\leaves(T))}$, and hence to show the lemma, it suffices to bound 
the term $\weightt(T) /\weightt(\leaves(T))$.

For any node $w \in T$ and the corresponding subtree $T_w$ rooted at $w$, we define 
$\size(T_w) = \weightt(\leaves(T_w)) + |\leaves(T_w)|$. 
We inductively show that for any node of $w \in T$, it holds that 
\begin{equation}
\label{eq:induction}
	\weightt(T_w) \leq \size(T_w)^{\log_2(\xi+2)}
	\enspace.
\end{equation}
For the induction basis, assume that 
$w$ is a leaf of $T$. Then,
\[
	\weightt(T_w) = \weightt(\leaves(T_w)) \leq \size(T_w) \leq \size(T_w)^{\log_2(\xi+2)}\enspace,
\] 
where the last inequality follows as  $\size(T_w) \geq |L(T_w)| = 1$ and $\xi >0$.

For the inductive step, let $w$ be a non-leaf node of $T$ and let
$u$ and $v$ be its children. Then,
	\begin{align*}
	\weightt(T_w) &= \weightt(T_u) + \weightt(T_v) + \weightt(w)\\
				 &\leq \weightt(T_u) + \weightt(T_v) +\xi \cdot \min \,\{\, \weightt(T_u),\, \weightt(T_v) \,\} \\
				 &\leq \size(T_u)^{\log_2(\xi+2)} + \size(T_v)^{\log_2(\xi+2)} 
				 	+ \xi \cdot  \min \{
					 	\size(T_u)^{\log_2(\xi+2)},\, \size(T_v)^{\log_2(\xi+2)} \}  \\
				 &\leq (\size(T_u)+\size(T_v))^{\log_2(\xi+2)}\\
				 &=\size(T_w)^{\log_2(\xi+2)}\enspace.
	\end{align*}
The first inequality follows by the lemma assumption and the second one by the
inductive assumptions for $T_u$ and $T_v$. The last inequality is a
consequence of \lref[Lemma]{lem:convex_ineq} and the final equality follows
by the additivity of function $\size$.

Recall that we scaled weights so that $\weightt(\leaves(T)) = |\leaves(T)|$. Therefore, applying
\eqref{eq:induction} to the whole tree $T$ yields
$\weightt(T) \leq (\weightt(\leaves(T)) + |\leaves(T)|)^{\log_2(\xi+2)} = (2\cdot |\leaves(T)|)^{\log_2(\xi+2)} = (\xi+2) \cdot |\leaves(T)|^{\log_2(\xi+2)}$. Hence,
\[
	\frac{\weight(T)}{\weight(\leaves(T))} 
	= \frac{\weightt(T)}{\weightt(\leaves(T))} 
	\leq \frac{(\xi+2) \cdot |\leaves(T)|^{\log_2(\xi+2)}}{|\leaves(T)|} 
	= (\xi+2) \cdot |\leaves(T)|^{\log_2(\xi/2+1)}\enspace ,
\]
which concludes the proof.
\end{proof}
\section{Conclusions}

We showed a deterministic algorithm \ALG for the MPMD problem whose
competitive ratio is $O(m^{\log_2 5.5})$. The currently best lower bound
(holding even for randomized solutions) is $\Omega(\log n / \log \log
n)$~\cite{matching-delays-bipartite-ashlagi}. A natural research direction
would be to narrow this gap.

It is not known whether the analysis of our algorithm is tight. However,
one can show that its
competitive ratio is at least $\Omega(m^{\log_2 1.5}) = \Omega(m^{0.58})$. To
this end, assume that all requests arrive at the same time. For such
input, \OPT does not pay for delays and simply returns the min-cost perfect matching. On the other hand,
\ALG computes the same matching as a~greedy routine (i.e., it greedily connects two nearest, not yet
matched requests). Hence, even if we neglect the delay costs of \ALG, its competitive
ratio would be at least the approximation ratio of the greedy algorithm for min-cost
perfect matching. The latter was shown to be $\Theta(m^{\log_2 1.5})$
by Reingold and Tarjan~\cite{matching-tarjan-reingold}.

The reasoning above indicates an inherent difficulty of the problem. In order to 
beat the $\Omega(m^{\log_2 1.5})$ barrier, an online algorithm has to 
handle settings when all requests are given simultaneously more effectively. 
In particular, for such and similar input instances it has to employ a non-local and
non-greedy policy of choosing requests to match.

\bibliographystyle{plainurl}
\bibliography{references}

\end{document}